\documentclass[aps,prx,twocolumn,nofootinbib,10pt]{revtex4-2}

\usepackage[english]{babel}


\usepackage{amsmath}
\usepackage{mathtools}
\usepackage{amssymb}
\usepackage{amsthm}
\usepackage{amsmath}
\usepackage{bm}
\usepackage{comment}
\usepackage{graphicx}
\usepackage[ruled,linesnumbered]{algorithm2e}
\usepackage{algpseudocode}
\usepackage{braket}
\usepackage{tikz}
\usetikzlibrary{quantikz2}

\newtheorem{theorem}{Theorem}[section]

\newtheorem{lemma}[theorem]{Lemma}
\newtheorem{definition}[theorem]{Definition}
\usepackage[nameinlink,noabbrev]{cleveref} 

\begin{document}

\title{Constraint-oriented biased quantum search for linear constrained combinatorial optimization problems}
\author{Sören Wilkening}
\affiliation{Institut f\"ur Theoretische Physik, Leibniz Universit\"at Hannover, Germany}
\author{Maximilian Hess}
\affiliation{Infineon Technologies AG, Neubiberg, Germany}
\author{Timo Ziegler}
\affiliation{Institut f\"ur Theoretische Physik, Leibniz Universit\"at Hannover, Germany}

\begin{abstract}
    In this paper, we extend a previously presented Grover-based heuristic to tackle general combinatorial optimization problems with linear constraints.
    We further describe the introduced method as a framework that enables performance improvements through circuit optimization and machine learning techniques.
    Comparisons with state-of-the-art classical solvers further demonstrate the algorithm's potential to achieve a quantum advantage in terms of speed, given appropriate quantum hardware.
\end{abstract}

\maketitle

\section{Introduction}

Grover's search algorithm \cite{Grover1996AFQ} is one of the longest-standing achievements in the field of quantum computing. It enables finding a marked element in a list of $N$ members with $O(\sqrt{N})$ queries to an oracle that is capable of detecting the marked state. In contrast, any classical method has to do $O(N)$ queries to a classical version of this oracle.
Grover's algorithm works best for search tasks with no additional structure, i.e., when there are no systematic or efficient ways to find the desired element. Thus, its quadratic speedup is only achieved over brute force search.
While in the worst case, combinatorial optimization problems \cite{du1998handbook} are expected to be solvable only in exponential time, they do, in the average case, exhibit quite a lot of structure, which is used by algorithms such as Branch and Bound \cite{Lawler1966BnB} or Dynamic Programming \cite{Martello1999COMBO} to find optimal solutions efficiently for a large number of practical instances.
For a sensible application of Grover's algorithm to combinatorial optimization, one therefore needs to exploit the problem's structure rather than quadratically ``improving" a brute-force search.
A successful showcase of this idea is the 0-1-knapsack \cite{Pisinger1998KnapsackProblems} algorithm QTG proposed in \cite{wilkening2024quantumalgorithmsolving01}.
The core idea underlying the QTG is a problem-specific state preparation step, which
filters out infeasible states from the initial superposition and introduces a bias towards states close (in terms of Hamming distance) to a reference state.

More precisely, consider the 0-1 knapsack problem with $n$ items. Given a list of weights $w_1,...,w_n \in \mathbb{N}$ and values $v_1,...,v_n \in \mathbb{N}$ together with a weight capacity $C \in \mathbb{N}$, the QTG prepares a state

\begin{equation}
    \ket{QTG} \coloneqq \sum_{x \in \{0,1\}^n: w^T x \leq C} a_x \ket{x}
\end{equation}

with the amplitudes $a_x$ depending largely on the Hamming distance between $x$ and a reference state $x_0$. The main ingredient of the state preparation circuit is a series of (biased) Hadamard gates, each controlled on a feasibility condition verified in an auxiliary register which stores the ``constraint consumption".
This state preparation routine, together with amplitude amplification \cite{Brassard_2002}, yields quite positive results, as shown in \cite{wilkening2024quantumalgorithmsolving01}.
This work aims at generalizing the techniques used by the QTG algorithm to other constrained optimization problems.
First, we show how to handle problems with a single linear constraint whose coefficients can be both negative and positive. In that case, the techniques used for the QTG algorithm remain largely applicable, only the initial state (the ``root" of the quantum tree) has to be adjusted to the signs of the constraint coefficients.
Next, we describe how to handle problems with two or more linear constraints. For these problems, it is in general not expected that one can efficiently prepare a superposition of exactly the feasible states, as the preparation of a feasible state constitutes in itself an NP-hard \cite{karp2009reducibility} problem in some cases, e.g. the subset sum problem, see \cite{Horowitz1974}. However, it is possible to filter out a significant part of infeasible states using similar techniques, i.e., iteratively branching, based on one feasibility criterion for each constraint, from an initial state. This does not guarantee that we are able to filter out all infeasible states, as it is common that one reaches a ``dead end", i.e., a partial assignment of variables which can not be continued in a feasible way as all further assignments would lead to infeasibilities due to one of the constraints.
Still, we find in our numerical experiments that a significant amount of infeasible states is pruned. Together with a biasing mechanism based on neighborhood search (as in the QTG algorithm), this state preparation lays the groundwork for a successful Grover search as the results show.
In this sense, we call our framework Constraint-oriented biased quantum search (CBQS).
The problem under consideration in this paper is a variant of the knapsack problem with a minimum filling constraint in addition to the usual capacity constraint.
We obtain numerical results for our methods through a benchmarking process which does not involve the emulation of quantum computers, hence we are able to analyze instances beyond the toy size which experimental quantum algorithm works are usually confined to. We explain our benchmarking process and justify that the results obtained in this way are a meaningful representation of what actual quantum computers would produce.

\section{Previous Work}
Preparing quantum states that meet certain requirements, e.g., being optimal with respect to a functional while satisfying a range of constraints, is an endeavour almost as old as the field of quantum computing itself.
The famous Grover's algorithm \cite{Grover1996AFQ, ambainis2005quantumsearchalgorithms}, promising a quadratic speedup over unstructured search is a promising candidate for this purpose but lacks the consideration of problem structure if taken in its original form. 
Nested quantum search \cite{Cerf_2000}, on the other hand, mitigates this shortcoming to a certain degree but still does not provide relevant benefit over classical methods without a specific state preparation routine.
Benchmarking primitives to go beyond what is classically simulable were first presented on variants of the Grover's algorithm\cite{Cade_2023} and find an application in the present work.
However, we take the benchmarks beyond the ambiguous oracle count and compare our method to state-of-the-art classical solvers. 
QTG \cite{wilkening2024quantumalgorithmsolving01, wilkening2025quantumsearchmethodquadratic} is a search-based quantum algorithm with a problem-specific state preparation routine but is limited to purely positive constraints. We aim to generalize the ideas presented to more general problem structures in this work.
Quantum branch-and-bound \cite{Montanaro_2020} is another promising candidate for an application of amplitude amplification in combinatorial optimization and may offer a speedup over branch-and-bound methods as found in classical state-of-the-art solvers \cite{gurobi}. 
The related problem class of linear programs with continuous variables was tackled via a quantum version of the simplex algorithm in \cite{huang2021branchboundmixedinteger, nannicini2022fastquantumsubroutinessimplex} and analyzed with hybrid benchmarking techniques in \cite{ammann2023realisticruntimeanalysisquantum}. 
We emphasize that there is remarkable amount of work to solve combinatorial optimization problems with quantum heuristics such as the QAOA \cite{farhi2014quantumapproximateoptimizationalgorithm}.
However, we ignore these methods as they elude from a proper runtime analysis, especially for the large instances considered in this work.

\section{Constraint-oriented Biased quantum search}

\subsection{Single linear constraints}
In this section, the aim is to develop the state preparation technique, which lies at the core of the CBQS algorithm
Let $w \in \mathbb{Z}^n$ be a vector of $n$ integers and $C \in \mathbb{Z}$. We want to understand how to prepare a quantum state $\sum_{x: w^T x \leq C}a_x \ket{x}$.
\begin{definition}
We call a tuple $(w,C) \in \mathbb{Z}^n \times \mathbb{Z}$ a \textit{linear constraint}. A vector $x \in \mathbb{R}^n$ is said to satisfy the constraint if $w^T x \leq C$. Most vectors considered in this work will be bit strings $x \in \{0,1\}^n$. To each coefficient vector $w$ we associate a bit string $x^w$ which minimizes the constraint, i.e., $w^T x^w \leq w^T x$ for all $x \in \{0,1\}^n$. 
For each bit string, non-negative integer $i \leq n$ and $x \in \{0,1\}^n$ we define $P_i^w(x) \coloneqq \sum_{k=1}^i w_k x_k + \sum_{k=i+1}^n w_k x^w_k$.
\end{definition}
The bit string $x^w$ can be explicitly constructed:
\begin{equation}\label{eq:xw}
    x^w_i = 
    \begin{cases}
        0, &\text{if } w_i \geq 0\\
        1, &\text{else}.
    \end{cases}
\end{equation}
The justification for the state preparation method which is used in the related QTG algorithm \cite{wilkening2024quantumalgorithmsolving01} and will be adapted to more general constraints is the following fact.

\begin{lemma}\label{lemma:branching}
For any $x \in \{0,1\}^n$, we have $w^T x \leq C$ if and only if 
\begin{equation}\label{eq:branching_condition}
P_i^w(x) \leq C \text{ for all } i \in \{1,...,n\}    
\end{equation}

\end{lemma}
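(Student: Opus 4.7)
The plan is to reduce both directions to a single key inequality: for every coordinate $k$ and every bit $x_k \in \{0,1\}$, we have $w_k x^w_k \leq w_k x_k$. This follows directly from the explicit formula~\eqref{eq:xw}: if $w_k \geq 0$, then $w_k x^w_k = 0 \leq w_k x_k$; if $w_k < 0$, then $w_k x^w_k = w_k \leq w_k x_k$ since $x_k \in \{0,1\}$. In other words, $x^w$ is componentwise minimizing for the linear form $w^T \cdot$ restricted to the hypercube.

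With this in hand I would argue as follows. Summing the coordinatewise inequality over $k = i+1, \dots, n$ and adding $\sum_{k=1}^i w_k x_k$ to both sides gives
\begin{equation}
P_i^w(x) \;=\; \sum_{k=1}^i w_k x_k + \sum_{k=i+1}^n w_k x^w_k \;\leq\; \sum_{k=1}^n w_k x_k \;=\; w^T x,
\end{equation}
for every $i \in \{1,\dots,n\}$. Moreover, taking $i = n$ the ``tail'' sum is empty and one obtains the exact identity $P_n^w(x) = w^T x$.

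The two directions of the lemma now follow immediately. For the ``only if'' direction, assume $w^T x \leq C$. Then for every $i$, $P_i^w(x) \leq w^T x \leq C$, so~\eqref{eq:branching_condition} holds. For the ``if'' direction, assume $P_i^w(x) \leq C$ for all $i$; specialize to $i = n$ and use $P_n^w(x) = w^T x$ to conclude $w^T x \leq C$.

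There is no real obstacle here: the only subtlety is recognizing that the partial expression $P_i^w(x)$ is designed precisely so that the as-yet-undecided coordinates $x_{i+1},\dots,x_n$ are filled in with their most ``constraint-friendly'' values. Once this observation is made, the proof is a two-line monotonicity argument. This property is exactly what justifies the branching strategy: at any depth $i$, $P_i^w(x) \leq C$ is a necessary condition for feasibility, and the failure of this condition certifies infeasibility of every completion of the current partial assignment.
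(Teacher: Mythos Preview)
Your proof is correct and rests on the same core fact as the paper's: the minimality of $x^w$ with respect to $w^T\cdot$ on $\{0,1\}^n$. The only cosmetic difference is that you derive the key inequality $P_i^w(x) \leq w^T x$ directly by summing the coordinatewise inequalities $w_k x^w_k \leq w_k x_k$ over the tail $k>i$, whereas the paper rewrites $P_i^w(x) = w^T x + w^T x^w - w^T(x^w_1,\dots,x^w_i,x_{i+1},\dots,x_n)$ and then invokes the global minimality of $x^w$ on the mixed vector; both arguments are equivalent and your presentation is arguably a touch more transparent.
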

\begin{proof}
    The implication $\Leftarrow$ is clear as setting $i=n$ yields the claim.
    For the implication $\Rightarrow$ we expand and estimate the expression for $P_i^w(x)$ as follows.
    \begin{align*}
    \begin{aligned}
        &\sum_{k=1}^i w_k x_k + \sum_{k=i+1}^n w_k x^w_k =\\
        &w^T x + w^T x^w - w^T(x_1^w,...,x_i^w,x_{i+1},...,x_n) \\
        &\leq C + w^T x^w - w^T(x_1^w,...,x_i^w,x_{i+1},...,x_n)
        \leq C
    \end{aligned}
    \end{align*}
    The first ``$\leq$" is assumed, the second follows directly from the definition of $x^w$.
\end{proof}
This result allows us to set up an iterative method of constructing any bit string which satisfies the constraint $(w,C)$, namely by sequentially sampling bits such that \ref{eq:branching_condition} is satisfied at every step.

\begin{algorithm}
\caption{Sampling bit strings $y$ satisfying $w^T y\leq C$}
\label{alg:sampling}

\textbf{Require} Linear constraint $(w,C)$, sampling probabilities $\{p_0^i>0, p_1^i>0: i \in \{1,...,n\}\}$\;
\textbf{Ensure} A bit string $y \in \{0,1\}^n$ such that $w^T y \leq C$\;
    Initialize $x^w \in \{0,1\}^n$ according to \ref{eq:xw}\;
    $P \gets C - P_0^w(x) = C - w^{T} x^{w}$\;
\For{$i = 1$ to $n$}{
    \eIf{$P \geq |w_i|$}{
        Sample $y_i \in \{0,1\}$ with probabilities:\;
        $\mathbb{P}(y_i = x^w_i) = p_i$, $\mathbb{P}(y_i = 1 - x^w_i) = 1 - p_i$\;
        \If{$y_i = 1 - x^w_i$}{
            $P \gets P - |w_i|$\;
        }
    }{
        $y_i \gets x^w_i$\;
    }
}
\end{algorithm}

By Lemma \ref{lemma:branching}, algorithm \ref{alg:sampling} samples bit strings satisfying the constraint $(w,C)$. If additionally $0<p_0^i<1$ holds for the branching probabilities at every branching step $i$, the algorithm has a positive probability of  sampling any feasible bit string $x \in \{0,1\}^n$.

\subsection{Multiple linear constraints}

Now consider the situation where we have multiple constraints $(w^1, C^1),...,(w^m, C^m)$ and look for a bit string satisfying $(w^i)^T x \leq C^i$ for all $i$. A version  of lemma \ref{lemma:branching}, where we check condition \ref{eq:branching_condition} for every constraint no longer holds as is evident from the following simple example. Let $w^1 = (1,2)^T, C^1=2$ and $w^2 = (-1,-2)^T, C^2=-2$. Satisfying both constraints amounts to the condition $x_1 + 2x_2 = 2$. Observe that the bit string $x\coloneqq(1,0)^T$ clearly does not satisfy the constraint, but fulfills $P_1^{w^1}(x)\leq 2$ as well as $P_1^{w^2}(x)\leq -2$. It is of course fully expected that there exists no efficient (i.e. polynomial time) method to sample bit strings which satisfy an arbitrary list of linear constraints as this would among other results imply a polynomial solution of the NP-hard subset sum problem. 
However, we still want to make use of the sampling method \ref{alg:sampling} as the branching conditions do manage to exclude a significant number of infeasible bit strings in many practical cases.

\begin{algorithm}
\caption{Sampling bit strings trying to satisfy multiple constraints}
\label{alg:sampling_multiple_constraints}
\textbf{Require} Linear constraints $(w^k,C^k)$ for $k=1,...,m$, sampling probabilities $\{p_0^i>0, p_1^i>0: i \in \{1,...,n\}\}$\;
\textbf{Ensure} A bit string $y \in \{0,1\}^n$\;
    Initialize $x^{w^k} \in \{0,1\}^n$ according to \ref{eq:xw} for all $k=1,...,m$\;
    $P^k \gets C^k - P_0^{w^k}(x)$ for all $k=1,...,m$\;
\For{$i = 1$ to $n$}{
    $b_0 \gets True, b_1 \gets True$ \algorithmiccomment{Initialize branching conditions}\;
    \For{$k = 1$ to $m$}{
        \eIf{$w^k_i \geq 0$}{
            $b_1 \gets b_1 \wedge (P^k \geq w^k_i)$\;
            \algorithmiccomment{Update branching conditions}\;
        }{
            $b_0 \gets b_0 \wedge (P^k \geq -w^k_i)$\;
            \algorithmiccomment{Update branching conditions}\;
        }
        \eIf{$b_0 \text{ and } b_1$}{
            Sample $y_i \in \{0,1\}$ with probabilities:\;
            $\mathbb{P}(y_i = 0) = p^i_0$, $\mathbb{P}(y_i = 1) = 1 - p^i_1$\;
        }{
            \eIf{$b_{1-x_i^{w^1}}$}{
                $y_i \gets 1-x^{w^k}_1$\;
            }{
                $y_i \gets x^{w^k}_1$\;
            }
    }
    \eIf{$y_i = 1$}{
        $P_k \gets P_k - w^k_i$ for all $k$ such that $w^k_i > 0$\;
        \algorithmiccomment{Update constraint budget}
    }{
        $P_k \gets P_k + w^k_i$ for all $k$ such that $w^k_i < 0$\;
        \algorithmiccomment{Update constraint budget}
    }
    }
}
\textbf{return} $y = (y_1, \dots, y_n)$\;
\end{algorithm}

The working principle of algorithm \ref{alg:sampling_multiple_constraints} is the same as that of algorithm \ref{alg:sampling}. At each variable $x_i$ we check for all constraints whether assigning the value $x_i=0$ or $x_i=1$ would break the constraint in all cases, i.e. for all possible further assignments of the variables $x_{i+1},...,x_n$. Based on the result of the checks, one of four outcomes is possible. 
\begin{enumerate}
    \item Both assignments $x_0=0$ and $x_1=1$ are possible, i.e. it is still possible to satisfy every constraint individually. In this case, a variable assignment for $x_i$ is drawn randomly according to probabilities $\{p_0^i, p_1^i\}$ which are defined beforehand.
    \item Only assignment $x_0=0$ guarantees that every constraint can be satisfied individually. In this case we assign $x_0=0$.    
    \item Only assignment $x_0=1$ guarantees that every constraint can be satisfied individually. In this case, we assign $x_0=1$.
    \item Both variable assignments are guaranteed to break at least one of the constraints. In this case, we accept that the resulting bit string will be infeasible and (arbitrarily) assign $x_i=x_{x_i}^{w^i}$ and/or return ``infeasible".
\end{enumerate}

The sampling probabilities $\{p_0^i, p_1^i\}$, which come into play in the first case, can be chosen in different ways. Their importance is revealed when we turn to constrained minimization problems, where we aim to find solutions that are not only feasible but also have a small objective value. The most important principle used in this work and in \cite{wilkening2024quantumalgorithmsolving01} is at each step to preferably sample the variable assignment which can be found in a reference bit string $x^*$, usually the best currently known solution. Concretely, we then have
\begin{equation}
    p_{x^*_i}^i = \frac{b+1}{b+2}
\end{equation}
where $b>0$ is a tunable parameter. Intuitively, choosing a larger $b$ will give stronger preference to bit strings which are close in Hamming distance to the reference bit string $x^*$. A more detailed description of sampling strategies can be found in \ref{subsec:improvement_techniques}.

\subsection{From biased sampling to biased quantum search}
We now turn the classical methods outlined above into a quantum algorithm whose goal is to prepare a superposition
\begin{equation}
    \ket{\phi} = \sum_{y \in \{0,1\}^n} \sqrt{p_y} \ket{y},
\end{equation}
where $p_y$ denotes the probability that the bit string $y$ is sampled by the corresponding classical sampling algorithm. This can be done in quite a straightforward manner, namely by storing and constantly updating the constraint budget in a separate register and performing (biased) Hadamard gates on the main qubits controlled on the constraint register. More concretely, assume we have one problem qubit labeled $i$ holding the initial value for the $i$-th variable and one for the constraint budget labeled $c$ and define the unitary operation $C_i^w$ via 

\begin{equation}\label{eq:controlled_Hadamard}
    C_i^w: \ket{x_i}_{i}\ket{P}_c \mapsto 
    \begin{cases}
        \Tilde{H}\ket{x_i}_i\ket{P}_c, \text{ if } P \geq \vert w \vert \\
        \ket{x_i}_{i}\ket{P}_c, \text{ otherwise}
    \end{cases}
\end{equation}


\begin{figure}[t]
\begin{quantikz}[column sep=0.2cm, row sep=0.3cm]
  \lstick{$\ket{x_i}$} &\qw & [1.5cm] & \gate{\Tilde{H}} & \ctrl{1} & \qw \\
  \lstick{$\ket{P_i(x)}$} & \qwbundle{L} & [1.5cm] & \ctrl[wire style = {"P_i(x) \geq w_i"}]{-1} & \gate{-w_i} & \qw \\
\end{quantikz}
\caption{Circuit for conditional branching and constraint updating on a single variable. }
\label{fig:branching_circuit}
\end{figure}

where $\Tilde{H}$ is a biased Hadamard gate and $P$ is the remaining constraint budget. It is described in \cite{wilkening2024quantumalgorithmsolving01}, Appendix A.4.e how to realize such a controlled operation with a series of multi-controlled gates.
After applying the controlled Hadamard, an update of the constraint budget register is performed via a subtraction operation controlled on the register of the variable, which is now in superposition. Depending on the sign of the constraint coefficient, the control value is $1$ or $0$.

\begin{equation}\label{eq:controlled_subtraction_1}
    S_i^1: \ket{x_i}_{i}\ket{P}_c \mapsto \ket{x_i}_{i}\ket{P - w_i x_i}_c
\end{equation}

\begin{equation}\label{eq:controlled_subtraction_0}
    S_i^0: \ket{x_i}_{i}\ket{P}_c \mapsto \ket{x_i}_{i}\ket{P - w_i (1-x_i)}_c
\end{equation}
A description of how to efficiently realize this type of operation based on QFT adders \cite{draper2004logarithmicdepthquantumcarrylookaheadadder} can be found in \cite{wilkening2024quantumalgorithmsolving01}.

In order to realize \ref{alg:sampling} as a quantum algorithm, we simply interleave the operations \eqref{eq:controlled_Hadamard}, \eqref{eq:controlled_subtraction_1} and \eqref{eq:controlled_subtraction_0} as follows:

\begin{equation}
U \coloneqq \prod_{i=1}^n C_i^{w_i} S_i^{(sign(w_i)+1)/2}
\end{equation}
Applying $U$ to the state $\ket{x^w}\ket{P_0^w}$ then yields a superposition state
 \begin{equation}
     \sum_{x \in \{0,1\}^n: x^T w \leq C}  \sqrt{p_x}\ket{x}\ket{C-w^T x}
 \end{equation}
 such that $p_x$ is exactly the probability with which algorithm \ref{alg:sampling} samples $x$.
Transferring algorithm \ref{alg:sampling_multiple_constraints} to a quantum circuit uses the same building blocks and is therefore analogous. 
In most applications of our state preparation technique, alongside the constraint(s) there is an objective function $f:\{0,1\}^n \rightarrow \mathbb{Z}$ to be minimized. It can be useful to track not only the constraint consumption but also the objective value for each bit string. This can be achieved in the same way as the constraint updates, namely with controlled additions/subtractions onto a separate objective register, see e.g. \cite{wilkening2024quantumalgorithmsolving01}. The quantum state $\ket{\psi} = U\ket{x^w}\ket{P_0^w}$ resulting from our state preparation unitary is then subjected to an amplitude amplification protocol in order to increase the measurement probabilities for states which exhibit an objective value lower than some threshold $T$.
The amplitude amplification operator is given by
\begin{equation}
    \mathcal{A} = U S_0 U^{\dagger} S_f
\end{equation}
with $S_0 = I - 2 \ket{0}\bra{0}$ being the reflection around the zero state and $S_f$ being the oracle operator with the action $S_f \ket{x} = -1^{g(x)} \ket{x}$, where $g=\mathbf{1}_{\{x:f(x)<T\}}$.
The complete amplitude amplification protocol \cref{alg:amplitude_amplification} is obtained by randomly choosing the number of AA iterations from a growing interval until either an improving solution is found or a termination criterion is reached. In our case, the termination criterion is defined as reaching a number of rounds chosen such that the expected total number of AA iterations matches the iteration limit $T>0$ (see \cref{fig:benchmarking_probs}).

\begin{algorithm}[t!]
\caption{Amplitude Amplification protocol for finding a bit string $x$ which improves the incumbent solution $x_0$ with respect to the objective function $f$.}
\label{alg:amplitude_amplification}
\textbf{Require} Objective function $f:\{0,1\}^n \rightarrow \mathbb{Z}$, feasible set $F \subset \{0,1\}^n$ such that membership $x \in F$ can be determined efficiently, incumbent solution $x_0 \in F$ with value $v \coloneqq f(x_0)$, iteration limit $T>0$, quantum state $\ket{\psi}$ to be ``searched"\;

$m \gets \lfloor \log_2(T) \rfloor$\;
Sample $r$ from $\{m, m+1\}$ with probabilities $p_m = 1-(\log_2(T)-n)$ and $p_{m+1}=1-p_m$\;
\For{$k$ in range($r$)}{
    Choose $j$ uniformly at random from $\{0,...,2^k - 1\}$\;
    Apply $\mathcal{A}^j$ to $\ket{\psi}$ and measure to obtain candidate solution $\ket{x}$\;
    \If{$x \in F$ and $f(x)<v$}{
        \textbf{return} $x$\;
    }
}
\textbf{return} $x_0$\;

\end{algorithm}

In this work, for benchmarking purposes, the problem we consider is the Knapsack problem with minimum filling constraint (MFKP) \cite{Xu2013Theknapsackproblemwithaminimumfillingconstraint}, that is written as:
\begin{align}
\begin{aligned}
	\max &\sum_{j = 1}^n x_j p_j\\
	s.t. 	&\sum_{j = 1}^n x_j w_j \leq c\\
		&\sum_{j = 1}^n x_j w_j \geq c - \epsilon\\
		& x_j \in \{0, 1\} &\forall 1\leq j \leq n.
\end{aligned}
\tag{MFKP}
\label{eq:gkp}
\end{align}
An excerpt of the respective state preparation quantum circuit is depicted in \cref{fig:circuit}.

\begin{figure*}[t]
\resizebox{\textwidth}{!}{
\begin{quantikz}[column sep=0.2cm, row sep=0.3cm]
  \lstick[wires=3]{$\ket{x}$} & 
    \qw & 
    \qw & 
    \qw & 
    \qw & 
    \gate{R_y(\theta_1)} &
    \gate{X} & 
    \qw &
    \qw &
    \qw & 
    \octrl{5} &
    \ctrl{7} &
    \qw &
    \qw &
    \qw &
    \qw &
    \qw &
    \qw &
    \qw &
    \qw &
    \qw &
    \qw &\\
  & 
    \qw & 
    \qw & 
    \qw & 
    \qw & 
    \qw &
    \qw & 
    \qw & 
    \qw & 
    \qw &
    \qw & 
    \qw & 
    \qw & 
    \qw &
    \qw & 
    \gate{R_y(\theta_2)} & 
    \gate{X} & 
    \qw & 
    \qw & 
    \octrl{4} &
    \ctrl{6} &
    \qw &\\
  & 
    \qw & 
    \qw & 
    \qw & 
    \qw &
    \qw & 
    \qw &
    \qw &
    \qw &
    \qw &
    \qw &
    \qw &
    \qw &
    \qw &
    \qw &
    \qw &
    \qw &
    \qw &
    \qw &
    \qw &
    \qw &
    \qw &\\
  \lstick[wires=2]{$\ket{0}^2$} & 
    \qw & 
    \qw & 
    \targ{} & 
    \qw & 
    \ctrl{-3} &
    \octrl{-3} & 
    \targ{} & 
    \qw & 
    \qw & 
    \qw & 
    \qw & 
    \qw & 
    \targ{} &
    \qw & 
    \ctrl{-2} &
    \octrl{-2} &
    \targ{} &
    \qw &
    \qw &
    \qw &
    \qw &\\
  & 
    \qw & 
    \qw & 
    \qw & 
    \targ{} &
    \ctrl{-1} &
    \ctrl{-1} &
    \qw & 
    \targ{} &
    \qw & 
    \qw & 
    \qw & 
    \qw & 
    \qw & 
    \targ{} &
    \ctrl{-1} &
    \ctrl{-1} &
    \qw &
    \targ{} &
    \qw &
    \qw &
    \qw &\\
  \lstick[wires=2]{$\ket{P_1 = c-P_0^{w}}$} & 
    \qw & 
    \gate[2]{- w_1} & 
    \octrl{-2} &
    \qw &
    \qw &
    \qw & 
    \octrl{-2} &
    \qw &
    \qw &
    \gate[2]{+ w_1} &
    \qw &
    \gate[2]{- w_2} &
    \octrl{-2} & 
    \qw & 
    \qw &
    \qw &
    \octrl{-2} &  
    \qw &
    \gate[2]{+ w_2} &
    \qw &
    \qw &\\
  & 
    \qw & 
    \qw & 
    \qw &
    \qw &
    \qw &
    \qw &
    \qw & 
    \qw &
    \qw &
    \qw &
    \qw &
    \qw &
    \qw &
    \qw &
    \qw &
    \qw &
    \qw &
    \qw &
    \qw &
    \qw &
    \qw &\\
  \lstick[wires=2]{$\ket{P_2 = -c+\varepsilon-P_0^{-w}}$} & 
    \qw & 
    \gate[2]{- w_1} & 
    \qw & 
    \octrl{-3} &
    \qw &
    \qw &
    \qw & 
    \octrl{-3} &
    \qw &
    \qw &
    \gate[2]{+ w_1} & 
    \gate[2]{- w_2} &
    \qw &
    \octrl{-3} & 
    \qw &
    \qw &
    \qw &
    \octrl{-3} & 
    \qw &
    \gate[2]{+ w_2} & 
    \qw & \\
  & 
    \qw &
    \qw & 
    \qw & 
    \qw &
    \qw &
    \qw &
    \qw &
    \qw &
    \qw &
    \qw &
    \qw &
    \qw & 
    \qw &
    \qw &
    \qw &
    \qw &
    \qw &
    \qw &
    \qw &
    \qw &
    \qw &\\
\end{quantikz}
}
\caption{State preparation quantum circuit example for a two-variable gapped knapsack problem with constraints $w^T x \leq c$ and $-w^T x \leq -(c-\varepsilon)$.
Rather than checking whether $w_j$ fits the remaining constraint, we can subtract its absolute value and check if the result is not negative by looking at the uppermost sign qubit in each of the lower two registers. 
Then we only have to do a controlled uncomputation after assigning the variable.}
\label{fig:circuit}
\end{figure*}

\subsection{Improvement techniques}\label{subsec:improvement_techniques}

The provided algorithm is a general framework that can be improved using a range of techniques, from circuit optimization to classical machine learning.
Here we present a few of those potential improvements.

\subsubsection{Advanced weak biasing}
\begin{figure}[t]
	\centering
	\includegraphics[width=\linewidth]{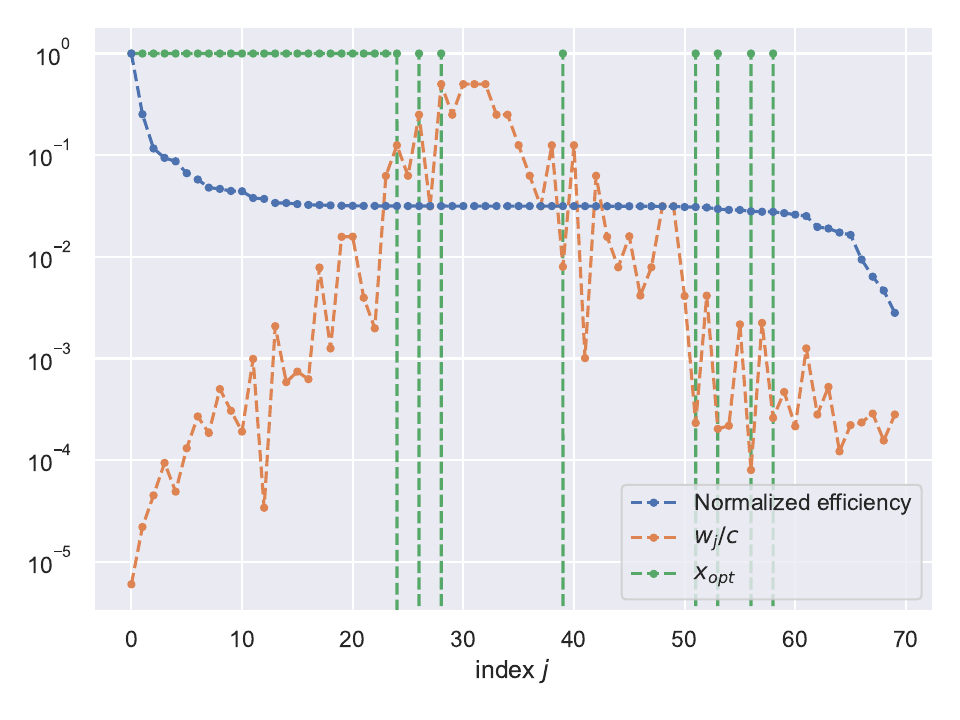}
	\caption{Properties of a 70 variable instance investigated in this work for the \ref{eq:gkp}.}
    \label{fig:opt-sol}
\end{figure}

Although the strategy employed in previous work \cite{wilkening2024quantumalgorithmsolving01, wilkening2025quantumsearchmethodquadratic} of biasing every assignment towards a known assignment has proven effective, this method can be further improved by incorporating more details from a given instance into the biasing strategy.
With the optimal biasing strategy, the optimal solution would be found quickly and with certainty; however, this requires too much information to be practical.
\cref{fig:opt-sol} shows some properties of a 70-variable instance as well as the assignment of the optimal solution $y$, indicating more information that can be incorporated into the biasing strategy.
Based on the figure, efficiency and relative cost compared to capacity might be good factors to incorporate into the biasing function.
Previously, the bias for assigning item $j$ was chosen, such that the probabilities of measuring assignments $0$ and $1$ are given by.
\begin{align}
\begin{aligned}
    &\left|\alpha_{0, j}\right|^2 = \frac{b (1-y_j) + 1}{b + 2}\text{ and}\\
    &\left|\alpha_{1, j}\right|^2 = \frac{b y_j + 1}{b + 2},
\end{aligned}
\end{align}
where $b = n / 4$. As this has proven very successful \cite{wilkening2024quantumalgorithmsolving01, wilkening2025quantumsearchmethodquadratic}, it will remain part of the biasing function and be extended by a different function $g: \{1,...,n\} \rightarrow [0,1]$.
The ratio between the extension function $g(j)$ and the previous biasing function, plays an important role and is therefore determined by an additional optimization parameter $f$:
\begin{align}
\label{eq:probs}
\begin{aligned}
    &\left|\alpha_{0, j}\right|^2 = 
    \frac{1}{1 + f}\left(\frac{b (1-y_j) + 1}{b + 2} + f(1-g(j))\right)
    \text{ and}\\
    &\left|\alpha_{1, j}\right|^2 = 
    \frac{1}{1 + f}\left(\frac{b y_j + 1}{b + 2} + fg(j)\right),
\end{aligned}
\end{align}
This method has the benefit of not increasing the complexity of the quantum circuits.

As an example for a refinement to the biasing function we define
a biasing function as a linear interpolation between the minimum and maximum cost relative to the constraint, if the item's efficiency $p_j / w_j$ is above a certain threshold:

\begin{align}
	\label{eq:biasing_strategy}
	g(j) = 
	\begin{cases}
		\frac{-0.6}{\max(r) - \min(r)} \left( \frac{w_j}{c} - \min(r) \right) + 0.8 &\text{if } \frac{p_j}{w_i} > 1\\
		0.2 &\text{else}.
	\end{cases}
\end{align}

Here $r = \{\frac{w_j}{c}: 1\leq j \leq n \}$.
The list of criteria which might lead to a refined biasing function presented in this paper is not exhaustive and highly problem-dependent.

\subsubsection{Look-ahead}
Another potential state preparation improvement is incorporating a subroutine called \textit{look-ahead}. 
This technique is commonly used within modern satisfiability solvers \cite{heule2009look}.
The basic premise of this technique is to more carefully choose the next variable assignment given a previous selection by an exhaustive search on a few variables $d \ll n$. 
It is therefore a technique built as a strong biasing method.
If, for example, a dead end for a specific variable assignment is discovered, this path can be avoided altogether. 
Within the quantum algorithm, this would result in the fact that the amplitudes of the feasible states are potentially boosted, reducing the number of Grover iterations to find these states.
Nevertheless, this exhaustive search comes at a cost. In the worst case, the runtime of the state preparation is scaled up by an exponential factor in the depth of the look-ahead.
Before we discuss an efficient implementation of the subroutine, we introduce some notation.

\begin{definition}
	A \text{quantum cycle} (or short \textit{cycle}) refers to the cost of a single quantum logic operation, defined in a quantum gate set, comprised of all the quantum gates that can be executed in parallel due to their action on disjoint sets of qubits.
\end{definition}

\begin{definition}
	Let $\mathcal{H} = \mathcal{H}^1\otimes \mathcal{H}^2 = \mathbb{C}^k\otimes \mathbb{C}^k$ be a Hilbert space comprising two registers of the same size. 
	Each register can be realized by $\log_2 k$ qubits.
	We define the copy of the binary representation of an integer stored in $\mathcal{H}^1$ into $\mathcal{H}^2$ by the unitary operation
	\begin{align}
		U^{1,2}_{copy} = \prod_{j = 1}^{\log_2 k} C^{1,j}(X^{2,j}),
	\end{align}
	where controlled by qubit $j$ in register 1, qubits $j$ in register 2 will be flipped.
\end{definition}

Since none of the controlled gates making up $U^{1,2}_{copy}$ act on the same qubits, the copy unitary can be executed within a single cycle.

\begin{lemma}
	Assume we are given $(m + 1)$ $k$-quantum registers $\mathcal{H}^j = \mathbb{C}^k$, where register one stores an integer $j$ and the remaining registers store the value 0.
	Creating m copies of the integer $j$, there exists a quantum algorithm requiring $\mathcal{O}(\log_2 m)$ cycles.
\end{lemma}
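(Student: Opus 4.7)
The plan is to use a binary fan-out strategy: at each cycle we double the number of filled registers by copying, in parallel, the contents of every already-filled register into a fresh empty register. Since one application of $U_{copy}^{1,2}$ costs a single cycle and only touches two registers, several such copies acting on pairwise disjoint register pairs can be scheduled within the same cycle by the definition of a quantum cycle.

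More concretely, I would label the registers $\mathcal{H}^0,\mathcal{H}^1,\ldots,\mathcal{H}^m$, with $\mathcal{H}^0$ initially storing the integer $j$ and the rest storing $0$. I would prove by induction on $t$ that after $t$ cycles one can arrange for registers $\mathcal{H}^0,\ldots,\mathcal{H}^{\min(2^t-1,m)}$ all to hold $j$. The base case $t=0$ is the hypothesis. For the step, let $s = \min(2^t-1, m)$ be the number of filled registers besides $\mathcal{H}^0$ after cycle $t$, and let $r = \min(2^{t+1}-1, m)$. In cycle $t+1$ I would apply, in parallel, the copy unitaries $U_{copy}^{i,\, s+1+i}$ for $i = 0, 1, \ldots, r - s - 1$. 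Each such copy acts only on registers $\mathcal{H}^i$ and $\mathcal{H}^{s+1+i}$, and as $i$ ranges over this interval, both the source indices $\{0,\ldots,r-s-1\}$ and the target indices $\{s+1,\ldots,r\}$ are pairwise disjoint (the source set lies within the already-filled range and the target set lies strictly above it). Hence all these copy unitaries act on pairwise disjoint qubits and, by definition, constitute a single cycle.

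After $T \coloneqq \lceil \log_2(m+1)\rceil$ cycles the induction yields $r = m$, i.e.\ all registers $\mathcal{H}^0,\ldots,\mathcal{H}^m$ hold $j$, which gives $m$ copies beyond the original. Since $T = \mathcal{O}(\log_2 m)$, the cycle count is as claimed.

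The only real obstacle is the bookkeeping that guarantees disjointness of source and target registers within each parallel layer; everything else reduces to the single-cycle cost of $U_{copy}^{1,2}$ established just before the lemma. I would not expect any subtlety beyond carefully handling the last (possibly incomplete) doubling layer when $m+1$ is not itself a power of two, which is why the bound is stated with a ceiling.
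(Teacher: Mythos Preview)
Your proposal is correct and follows essentially the same approach as the paper: a binary fan-out that doubles the number of filled registers per cycle using parallel applications of $U_{copy}$ on disjoint register pairs. Your version is in fact more carefully stated than the paper's, which simply writes $U^j=\prod_{i=1}^{2^{j-1}}U_{copy}^{i,\,i+2^{j-1}+1}$ and asserts that applying $U^1,\ldots,U^{\log_2 m}$ suffices; you additionally handle the final incomplete layer when $m+1$ is not a power of two and make the disjointness argument explicit.
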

\begin{proof}
	Given the Hilbert space $\mathcal{H}=\mathcal{H}^1 \otimes \cdots \otimes \mathcal{H}^{m+1}$, we can define the unitary
	\begin{align}
		U^j = \prod_{i = 1}^{2^{j-1}} U^{i, i + 2^{j-1} + 1}_{copy}
	\end{align}
	The copy operations within $U^j$ act on disjoint registers, which is why they can be implemented within a single cycle.
	Consequently, applying $U^1$ up to $U^{\log_2 m}$ doubles the number of copies and therefore requires only $\log_2 m$ cycles.
\end{proof}

Now we can discuss the efficient sub-exponential runtime implementation of the look-ahead quantum subroutine.
\begin{theorem}
\label{theorem:look-ahead-depth}
	Let $d>0$ and $c>0$ be integers. There exists a quantum algorithm that can implement the look ahead for the \ref{eq:gkp} with  $\mathcal{O}(nd)$ additional cycles utilizing $\mathcal{O}(2^{d+1}\log_2 c)$ additional qubits compared to the plain implementation.
\end{theorem}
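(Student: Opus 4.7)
\medskip

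\noindent\textbf{Proof proposal.} The plan is to parallelize the depth-$d$ exhaustive look-ahead by materializing, before each variable is branched, a full set of $2^d$ copies of the two constraint-budget registers (one per possible continuation of length $d$), evolving all copies simultaneously, and then combining their feasibility verdicts to bias (or suppress) the controlled Hadamard for the current variable. The $\mathcal{O}(2^{d+1}\log_2 c)$ ancilla count then comes from storing $2^d$ copies of each of the two $\mathcal{O}(\log_2 c)$-qubit budget registers associated with the \ref{eq:gkp} capacity and minimum-filling constraints.

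First I would, at variable index $i$, invoke the preceding copy lemma to make $2^d$ copies of $\ket{P_1}$ and $\ket{P_2}$; by the doubling argument in that lemma this costs $\mathcal{O}(\log_2 2^d)=\mathcal{O}(d)$ cycles and produces disjoint register blocks indexed by the $2^d$ bit strings $s\in\{0,1\}^d$. Next, on each block $s$ in parallel, I would simulate the $d$ controlled additions/subtractions $\pm w_{i+1},\dots,\pm w_{i+d}$ dictated by $s$, using the controlled-arithmetic primitives already described (QFT adders, as cited via \cite{draper2004logarithmicdepthquantumcarrylookaheadadder}); because the blocks occupy disjoint qubits, this takes only $\mathcal{O}(d)$ cycles and not $\mathcal{O}(2^d d)$. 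A single ancilla per block then records feasibility by checking the sign qubits of both updated budget registers, exactly as in the circuit of \cref{fig:circuit}.

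Third, I would aggregate the $2^d$ feasibility flags into two collective flags $F_0,F_1$, indicating whether at least one feasible depth-$d$ continuation exists after setting $x_i=0$ or $x_i=1$ respectively; this is a fan-in OR over $2^{d-1}$ flags per branch, realizable in $\mathcal{O}(d)$ cycles using a balanced tree of Toffolis on fresh ancillas. The flags $F_0,F_1$ then control the biased Hadamard $\tilde H$ on qubit $i$ in the same way as in \eqref{eq:controlled_Hadamard}, with the four cases enumerated after \cref{alg:sampling_multiple_constraints} handled by the appropriate $F_0,F_1$ pattern. Finally I would uncompute the look-ahead ancillas and the $2^d$ copies in reverse order, again in $\mathcal{O}(d)$ cycles; this frees the $\mathcal{O}(2^{d+1}\log_2 c)$ ancillas so that the same block can be reused at step $i+1$, giving the claimed $\mathcal{O}(nd)$ total additional cycles after summing over $n$ variables.

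The main obstacle I expect is the uncomputation: the copied budget registers must be cleanly reset despite the data dependencies introduced by the branch-specific additions, and the feasibility ancillas must be uncomputed \emph{after} the bias has been applied so that the $F_0,F_1$ information is not prematurely erased. I would address this by adhering to a strict Bennett-style compute/use/uncompute pattern, copying $F_0,F_1$ into two persistent ancillas that are kept until the branching of $x_i$ is complete, then reversing the entire look-ahead circuit; since each layer has depth $\mathcal{O}(d)$, the uncomputation preserves the overall $\mathcal{O}(nd)$ bound, while the peak qubit count remains $\mathcal{O}(2^{d+1}\log_2 c)$ because ancillas are recycled between variables.
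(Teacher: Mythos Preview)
Your proposal is correct and follows the same high-level architecture as the paper: fan out $2^d$ copies of the two budget registers in $\mathcal{O}(d)$ cycles via the copy lemma, evaluate all depth-$d$ continuations in parallel on disjoint blocks, aggregate the feasibility bits in $\mathcal{O}(d)$ depth, and uncompute; iterating over the $n$ variables gives the $\mathcal{O}(nd)$ additional cycles, and the peak ancilla count is the $2^{d+1}$ copies of size $\log_2 c$.

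The one substantive difference is in the arithmetic on the copies. You perform $d$ sequential controlled additions $\pm w_{i+1},\dots,\pm w_{i+d}$ on each copy, costing $\mathcal{O}(d)$ adder depths per variable. The paper instead observes that, since the continuation $s$ is a classical index, the cumulative constraint consumption for each branch can be \emph{precomputed classically} into the two sets $P_1=\{\sum_i b_i(k)w_{i+j}\}$ and $P_2=\{\sum_i(1-b_i(k))w_{i+j}\}$, so that each copy needs only a \emph{single} subtraction of a known constant. This is why the paper can claim the parallel constraint checks add ``no time overhead'' relative to the plain circuit, and the entire per-variable overhead is attributed to copying ($d$ cycles), a single multi-controlled aggregation ($2d$ cycles), and uncomputation ($d$ cycles), for a concrete $4nd$. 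Your serial additions still stay within the asymptotic $\mathcal{O}(nd)$ budget, so the theorem goes through, but you pay an extra factor equal to the adder depth that the paper's precomputation avoids. Your Toffoli-tree OR and the paper's multi-controlled gate are equivalent in depth, and your more careful Bennett-style treatment of uncomputation is, if anything, more explicit than the paper's one-line ``uncomputing all the additional registers adds $d$ cycles.''
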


\begin{proof}
\label{theorem:look-ahead}
	To implement a look ahead of depth $d$ when considering assignment $x_j$ requires the investigation of all the $2^d$ potential assignments $F=\{k:k\in\mathbb{N}, 0 \leq k\leq 2^d - 1\}$.
	Instead of doing the constraint checks for all the weights individually for every assignment $k\in F$, we can precompute the two sets
	\begin{align}
    \begin{aligned}
		P_1 =& \left\{\sum_{i = 0}^d  b_i(k) w_{i+j}: k \in F \right\} \text{ and }\\
		P_2 =& \left\{\sum_{i = 0}^d  \left( 1 - b_i(k) \right)w_{i+j}: k \in F \right\}, \\&\text{where } b_i(k) \coloneqq \left\lfloor \frac{k}{2^i} \right\rfloor\mod 2.
    \end{aligned}
	\end{align}
	
	Given the Hilbert space $\mathcal{H} = \mathcal{H}^1 \otimes \mathcal{H}^2 \otimes \cdots \otimes \mathcal{H}^{2^{d+1}}  = \mathbb{C}^{\log_2 c} \otimes  \cdots \otimes \mathbb{C}^{\log_2 c}$, where register 1 and 2 store the remaining capacities of the two constraints. 
	Applying $\prod_{j = 1}^{d} U^j$ on the subspaces $\mathcal{H}^1 \otimes \mathcal{H}^3 \otimes \cdots \otimes \mathcal{H}^{2^d}$
	and $\mathcal{H}^2 \otimes \mathcal{H}^{2^d + 1} \otimes \cdots \otimes \mathcal{H}^{2^{d+1}}$ creates the necessary number of copies within $d$ cycles. 
	Afterwards, for every value $k\in P_1, P_2$, the algorithm checks if $k$ respects the constraints, which will be stored in ancilla qubits.
	This can be done fully in parallel (as well as the uncomputation), not adding any time overhead to the algorithm.
	Instead of just checking one boolean value that determines whether the assignments for $x_j$ are possible, we have to check whether we can find at least one satisfying assignment among all $2^d$ possible assignments that originate in $x_j = 0$ or $1$.
	This operation can be performed by a single multi-controlled gate on all the booleans, resulting in an additional time overhead of $2\log_2 2^d = 2d$ cycles.
	Finally, uncomputing all the additional registers adds $d$ cycles.
	Compared to the plain implementation, applying the look ahead on every variable $x_j$ for $1\leq j \leq n$ adds a time overhead of $4nd$.
\end{proof}

The look-ahead implementation comprises $\mathcal{O}(n2^d)$ quantum gates, which require error correction.
This will result in a quasi-polynomial logarithmic time overhead \cite{error_correction}, yielding an $\mathcal{O}(poly(d))$ overhead. Consequently, this routine can still be efficiently implemented, provided error-corrected qubits and a low look-ahead depth $d \ll n$.

\subsubsection{Look-ahead biasing}
More information gathered by the look-ahead could be incorporated into the state preparation.
As described in the previous section, the look-ahead is used as a strong biasing strategy.
Instead of just checking whether any of the assignments $x_j$ and $\lnot x_j$ may lead to a non-clause-violating assignment, we could track on both branches, how many potential candidate solutions remain, and apply a respective rotation gate accordingly.
For example, if a more uniform distribution is desired, the amplitudes could be adjusted accordingly so that all potential candidate solutions are weighted similarly.
It is also important that this extension is efficiently implementable.
\begin{lemma}
    Let $d > 0$ be an integer.
    There exists a quantum algorithm that implements the look-ahead biasing using $\mathcal{O}(d\log_2 d)$ cycles and $\mathcal{O}(2^{d+1}d)$ additional qubits.
\end{lemma}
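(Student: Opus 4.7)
The plan is to upgrade the look-ahead of \cref{theorem:look-ahead-depth} from an existential flag to a \emph{count}. That construction already produces, in ancilla qubits, $2^{d+1}$ boolean feasibility flags for all candidate continuations of length $d$; these are naturally partitioned by the choice $x_j = 0$ or $x_j = 1$ into two groups of size $2^d$. I would sum each group into an integer register, call the results $n_0$ and $n_1$, compute a biasing angle $\theta$ from $(n_0, n_1)$ so that $|\alpha_c|^2 \propto n_c$, apply $R_y(\theta)$ to the $x_j$-qubit, and finally uncompute every auxiliary register.

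The dominant cost is the summation. I would use two balanced binary trees of carry-lookahead adders: at level $k$ (for $1 \le k \le d$) there are $2^{d-k}$ parallel additions of two $k$-bit integers, each of depth $\mathcal{O}(\log k)$. Because the additions at a given level act on disjoint registers they execute simultaneously, so the cumulative depth is
\begin{equation}
    \sum_{k=1}^{d} \mathcal{O}(\log k) \;=\; \mathcal{O}(\log d!) \;=\; \mathcal{O}(d \log d)
\end{equation}
cycles, matching the claimed bound. The intermediate count registers and carry scratch together occupy $2\sum_{k=1}^{d} 2^{d-k}(k+1) = \mathcal{O}(2^{d+1})$ qubits, comfortably inside the stated $\mathcal{O}(2^{d+1} d)$ budget.

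Since $n_0, n_1 \le 2^d$, the angle $\theta = 2\arctan\sqrt{n_1/n_0}$ (with the degenerate $n_0 = 0$ case resolved by deterministic assignment) admits an $\mathcal{O}(d)$-bit approximation. I would load it into an $\mathcal{O}(d)$-qubit angle register via a classically precomputed QROM keyed on $(n_0, n_1)$ in $\mathcal{O}(d)$ depth, and realize $R_y(\theta)$ by the standard decomposition into $\mathcal{O}(d)$ controlled $Y$-rotations on the angle bits. This adds $\mathcal{O}(d)$ cycles and $\mathcal{O}(d)$ qubits, keeping the totals within $\mathcal{O}(d\log d)$ depth and $\mathcal{O}(2^{d+1}d)$ space. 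The main obstacle I anticipate is the reversible bookkeeping: the adder trees, the QROM load, and all feasibility flags inherited from \cref{theorem:look-ahead-depth} must be returned to $\ket{0}$ before moving on to the next variable, without disturbing the already-applied $R_y$ on the $x_j$-qubit. This is handled by running each subcircuit in reverse in the usual Bennett fashion, which preserves both the depth and qubit bounds.
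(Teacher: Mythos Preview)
Your divide-and-conquer summation of the $2^{d+1}$ boolean feasibility flags via a balanced tree of carry-lookahead adders is exactly the mechanism the paper uses, and your depth accounting $\sum_{k=1}^{d}\mathcal{O}(\log k)=\mathcal{O}(d\log d)$ is actually a refinement of the paper's cruder $d\cdot\mathcal{O}(\log d)$ estimate. Your adder-tree qubit count $\mathcal{O}(2^{d+1})$ is likewise tighter than the stated $\mathcal{O}(2^{d+1}d)$, which in the paper arises simply from padding every intermediate sum to $d$ bits.

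Where you diverge is in converting $(n_0,n_1)$ into a rotation. The paper does \emph{not} attempt to load $\theta=2\arctan\sqrt{n_1/n_0}$ exactly; it just allows ``at most $d$ case distinctions'', i.e.\ a coarse binning of the count into $\mathcal{O}(d)$ intervals, each triggering one controlled $R_y$ on the item qubit. That stays within the depth budget trivially. Your QROM route is more ambitious, but the claim that a QROM keyed on $(n_0,n_1)$ runs in $\mathcal{O}(d)$ depth is not justified as written: the key space has size $\Theta(2^{2d})$, and standard QROM/QROAM constructions have depth $\Theta(2^{2d})$ or, with select--swap, $\Theta(2^{d})$, either of which would swamp the summation cost. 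A genuinely log-depth fanout lookup would require $\Theta(2^{2d})$ ancillas, overrunning the $\mathcal{O}(2^{d+1}d)$ allowance. This is not a fatal flaw---you can recover the bound by adopting the paper's $d$-bin discretisation, or by computing the angle in-circuit via a $d$-bit division and a piecewise-linear $\arctan$---but the QROM step as stated is a gap.
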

\begin{proof}
    To count the potential candidates for both assignments, $2^d$ boolean variables have to be summed up. 
    This can be done efficiently utilizing the \textit{divide-and-conquer} technique \cite{SMITH198537}, where pairwise additions are done in parallel.
    As the maximum value cannot surpass $2^d$, we require $d$ qubits to store the result, which leads to additions with depth $\mathcal{O}(\log_2 d)$ \cite{draper2004logarithmicdepthquantumcarrylookaheadadder}.
    Based on the counts, controlled rotations are applied to the item assignment.
    If we allow at most $d$ case distinctions, which should be sufficient, we achieve a runtime-efficient implementation.
\end{proof}

For the considered problem, we didn't achieve any improvements with this technique, which is why it is not included in the results. 
This extension may have a greater impact on different problems classes or with more advanced utilization.

\subsubsection{Item ordering}
Another important consideration is the order in which items are evaluated. 
This does not require a change to the prescribed initial algorithm, yet it might significantly impact its performance.
Even though the \ref{eq:gkp} has a straightforward property to sort the items, the efficiency $p_j / w_j$, this might not apply to more general and complex problems.
We investigate the effects of various item orderings based on the properties of the considered problem, as well as a random ordering, to provide more insight into the method's performance for problems without a clear ordering prescription.

\begin{figure*}[t]
    \centering
    \includegraphics[width=\linewidth]{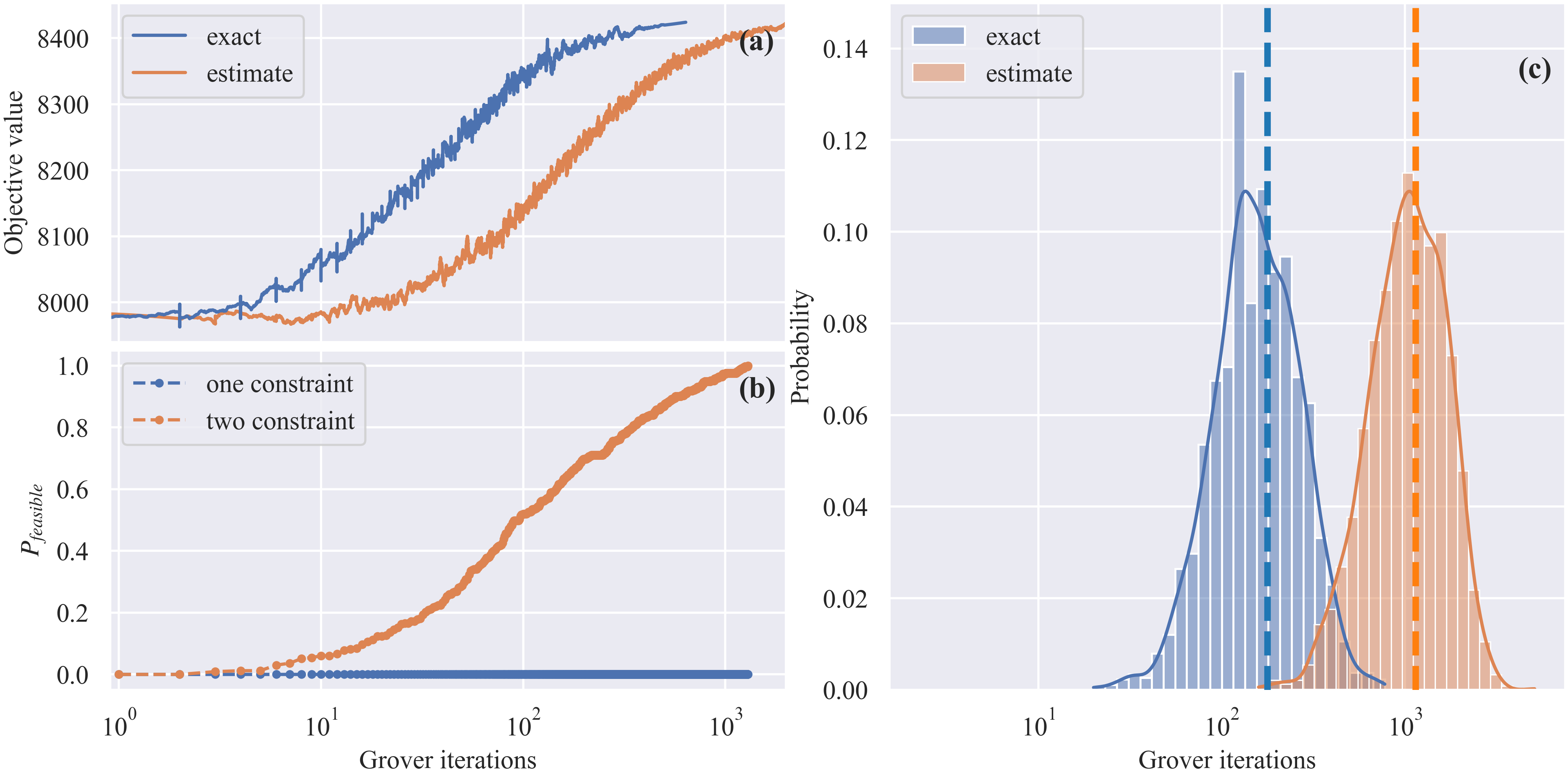}
    \caption{
    \textbf{(a)} Behavior of the objective value over time of the exact against the sampling-based benchmarking method.
    \textbf{(b)} Probability of finding a feasible solution when incorporating both or just one of the two constraints in the state preparation, tested on a non-trivial 70-variable instance with random ordering of items.
    \textbf{(c)} Number of oracle calls of the exact and sampling-based heuristic benchmarking method to find the optimal solution of a 25-variable instance.
    We observe that approximate benchmarking provides an empirical lower bound on the number of oracles and that significantly fewer Grover iterations are required to find a feasible state when incorporating both constraints.
    }
    \label{fig:general_properties}
\end{figure*}

\section{Benchmarking}
There are two main ways to benchmark our algorithm. One is to use a hybrid benchmarking scheme \cite{Cade_2023}, i.e., classically finding the list of states exceeding a certain objective threshold and calculating an amplitude for each of them. Then one can analytically compute how the amplitudes change after a certain number of AA iterations. 
A more detailed description can be found in the supplementary information of \cite{wilkening2024quantumalgorithmsolving01}.
The bottleneck in this approach is surely the compilation of all states exceeding the initial threshold, and we find that it already becomes intractable for instances with fewer than $100$ variables.
For poorly chosen thresholds, which might not be avoidable in complex problems and instances, even $30-40$ variables can exceed the capabilities of computing systems.
To circumvent this issue, we instead resort to an approximate benchmarking scheme. More precisely, we sample from algorithm \ref{alg:sampling_multiple_constraints} a number of times $\Tilde{T}>0$ and find a corresponding ``iteration maximum" $T$ for which algorithm \ref{alg:amplitude_amplification} would offer almost the same chances of success as its classical counterpart. With the approximate iteration limit $T$, we benchmark quantum resources by counting the gates required to perform $T$ Grover iterations. In the benchmarking for the present work, we chose $\Tilde{T}=T^2$ which corresponds to the intuition that Amplitude Amplification should offer a quadratic speedup compared to repeated sampling of the initial state.
Let $p \in [0,1]$ be the probability that \cref{alg:sampling_multiple_constraints} samples a good state. Consequently the amplitude of these states after state preparation will be $\sqrt{p}$. Now, sampling according to the initial probabilities $T$ times yields a success probability of 
\begin{equation}
    P_T = 1 - (1-p)^T.
\end{equation}
On the other hand, applying the Amplitude Amplification protocol \ref{alg:amplitude_amplification} with an iteration maximum of $T$ yields the success probability

\begin{align}
    P_T^{AA} = &(1-w)\prod_{r=1}^m \sum_{j=0}^{2^r - 1} \frac{1}{2^r} \cos^2(2(j+1)\theta)+\\
    &w\prod_{r=1}^{m+1} \sum_{j=0}^{2^r - 1} \frac{1}{2^r} \cos^2(2(j+1)\theta)
\end{align}

where $m \coloneqq \lfloor \log_2(T) \rfloor$, $w \coloneqq \log_2(T)-m \in [0,1]$, and $\theta \coloneqq \arcsin(\sqrt{p})$.
The expressions are hard to compare analytically, though a concrete example of the success probabilities with $p=0.001$ is given in \cref{fig:benchmarking_probs}. We observe that especially in the very relevant regime of slightly lower success probabilities, the curves behave very similarly and if anything our benchmarking scheme paints a slightly pessimistic picture.
It cannot be avoided that in the regime of a high iteration maximum, the classical protocol exhibits a higher success probability as, in contrast to the AA protocol, it does not have the chance of ``overcooking" the number of iterations.
\begin{figure}[ht]
    \centering
    \includegraphics[width=0.8\linewidth]{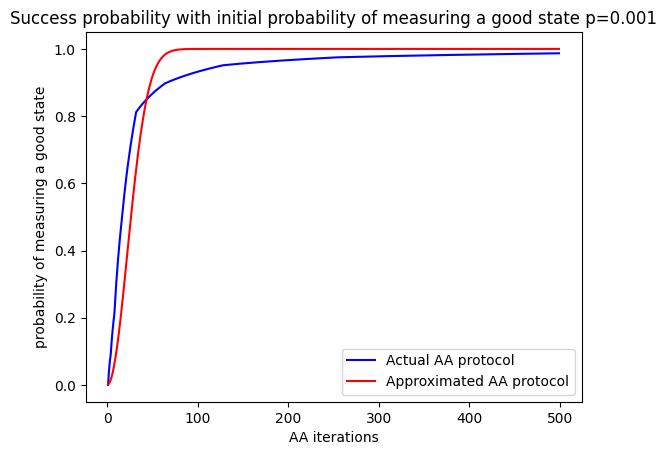}
    \includegraphics[width=0.8\linewidth]{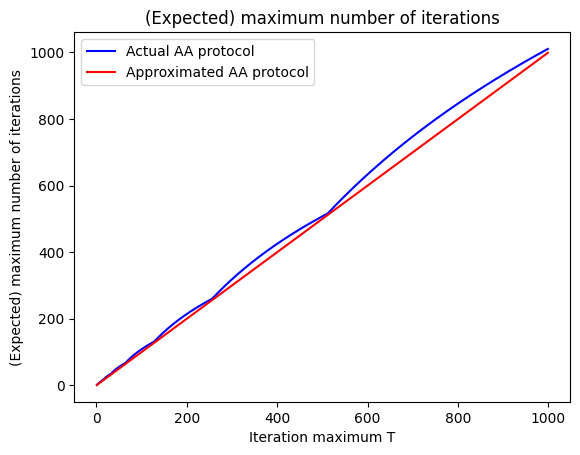}
    \caption{Comparison of our benchmarking strategy (red) and the actual amplitude amplification protocol suggested in this work (blue). We compare both success probability and expected number of iterations and find that the benchmarking protocol behaves very similarly in both metrics and can thus be used to produce meaningful benchmarks for the suggested quantum algorithm.}
    \label{fig:benchmarking_probs}
\end{figure}

\subsection{Comparisons}
When comparing algorithms' efficiency, not only the time to find the optimal solution or prove its optimality is of interest, especially for very complex problems, but also finding good incumbent solutions is important.
An incumbent solution is a newly found solution with an improved objective value compared to the current best-found assignment.
For this reason, we benchmark all the algorithms that are our quantum algorithm (CBQS), Gurobi, simulated annealing, and a generalized Goemans-Williamson, by storing the time stamp and objective value of various incumbent solutions.
To estimate the runtime of our quantum algorithm, we have to make some assumptions about the quantum hardware's capabilities. 
As we don't aim to provide an actual quantum advantage assessment, rather we want to show the potential for a far-term quantum advantage, we assume an optimistic, but realistic quantum hardware.
We assume that 
\begin{itemize}
    \item all consecutive gates acting on disjoint sets  of qubits can be performed in parallel, 
    \item (single controlled) arbitrary single qubit unitaries as well as the Toffoli, can be executed with the cost of a single gate
    \item logical gates can be performed in a time of $10^{-8}s$.
\end{itemize}
While the logical cycle time can be seen as \emph{very optimistic}, any overhead due to factors such as error correction can be absorbed into the time, thereby adjusting the results.
Yet this is beyond the scope of this paper.

All experiments are performed on an Intel Core i7 with $6$ cores ($12$ threads), a 2.6 GHz (4.1 GHz) clock speed, and 16 GB RAM.
The source code, experiments, and data are available at \cite{repo-code, repo-data}.

\subsection{Goemans-Williamson inspired algorithm for the 0-1 Knapsack problem}
There is a way to approximate the optimization value of the problem \eqref{eq:gkp} by sampling from a distribution obtained from a semi-definite relaxation of the problem.
In \cite{Lasserre2016_MaxCutFormulation}, the author presents an approach to encode any 0-1 quadratic problem with linear constraints in a MAXCUT problem, which shall be explained more explicitly below.
Apart from achieving good outer bounds to use in Branch-and-Bound-like schemes, the optimal solution of the semi-definite program can be used to sample bit strings.
For the original MAXCUT problem, this method has been proven to be optimal, delivering solutions of mean value at least 0.87856 times the optimal value \cite{GoemansWilliamson}.
An empiric study of this versatile approach on more  optimization problems is conducted in \cite{GoemansWilliamsonForBQP}

In a first step, we promote the inequality constraints to equality constraints by introducing an auxiliary, so called slack variable \( 0 \le y \le \varepsilon \) and its binary representation \( (x_{n+1}^{}, \dots, x_{n+s+1}^{}) \), i.e., \( y = \sum_{j=1}^{s} 2^{j} x_{n+1+j}, \ s = \lceil \log \varepsilon \rceil \).
Next, any binary function \( f: \{0, 1\} \to \mathbb{R} \) can be transformed to a function on the nodes of the \( n \)-dimensional hypercube \( f_{}^{\prime}: \{-1, 1\} \to \mathbb{R} \) by the variable transformation \(x_{j}^{} \mapsto 2 x_{j}^{} - 1 \).
Therefore, \eqref{eq:gkp} is equivalent to an optimization over the \( n \)-dimensional hypercube with \( p \mapsto p/2, \ w \mapsto w/2 \), and \( c \mapsto c - w_{}^{T} \bm{1}\) where \( \bm{1} \) is the vector of all ones.
The quadratic form that we obtain by appending the constraints to the objective function as a penalty and homogenization,
\begin{equation}
    Q(x_{0}^{}, x) = x_{0}^{} p_{}^{T} x + (2 \cdot p_{}^{T} x + 1) \cdot \left\| w_{}^{T} x - x_{0} c\right\|
\end{equation}
is equivalent to \eqref{eq:gkp} \cite[Theorem 2.2]{Lasserre2016_MaxCutFormulation}.
Solving the simple semidefinite program \( \max \{ \langle \bm{Q}, \bm{X} \rangle \ \vert \ X_{ii} = 1 \ \forall i = 1, \dots, n+s+2, \ \bm{X} \succeq 0 \} \) where \( Q(x_{0}^{}, x) = (x_{0}^{}, x) \bm{Q} (x_{0}^{}, x)^{T} \) thus yields an upper bound to the original problem.

Due to its semi-definiteness, the optimal solution is a Gram matrix whose vectors can be recovered by, e.g., an eigenvalue decomposition of \( \bm{X} \).
The hyperplane rounding and subsequent post-selection on constraint satisfaction of course does not recover the approximation bound of the original Goemans-Williamson algorithm, but performed in parallel on multiple cores yields some non-negligible success probability.

\begin{figure*}[t!]
    \centering
    \includegraphics[width=\linewidth]{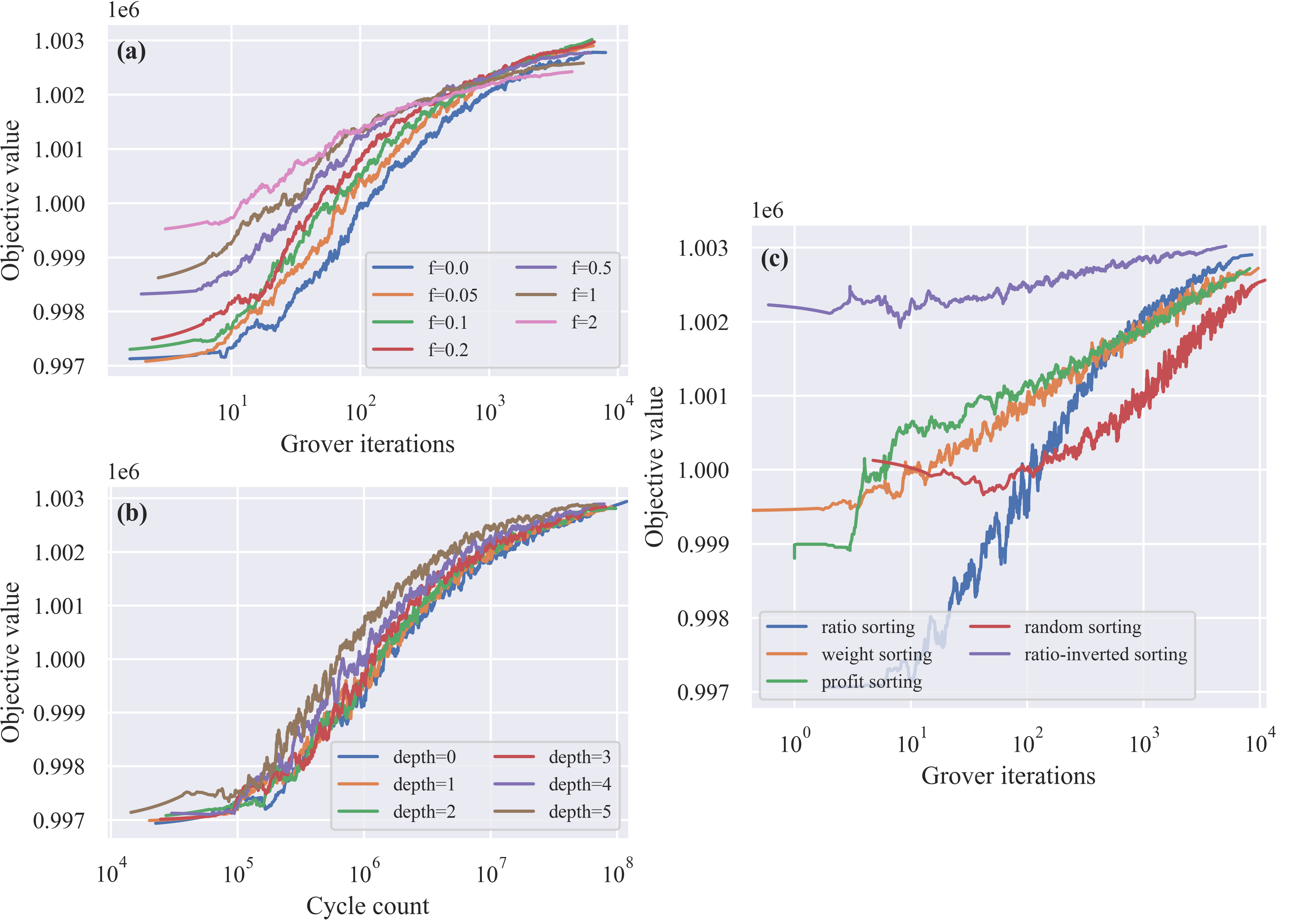}
    \caption{
        \textbf{(a)} Influence of the biasing factor $f$ mixing the original biasing strategy with a manual function, further considering the structure of a given instance of the \ref{eq:gkp}.
        Given an appropriate biasing function, the mixing factor can significantly improve runtime performance, especially in the early stages of the search.
        \textbf{(b)} Performance of the biased quantum search when introducing the look-ahead subroutine in the state-preparation circuit. 
	    We observe that a deeper look-ahead reduces the number of Grover iterations required to find the incumbent solutions.
        \textbf{(c)} The number of Grover iterations required to find incumbent solutions based on the variable sorting of a non-trivial 70-variable instance of the \ref{eq:gkp}.
    }
    \label{fig:improvemts}
\end{figure*}

\begin{figure*}[t]
    \centering
    \includegraphics[width=\linewidth]{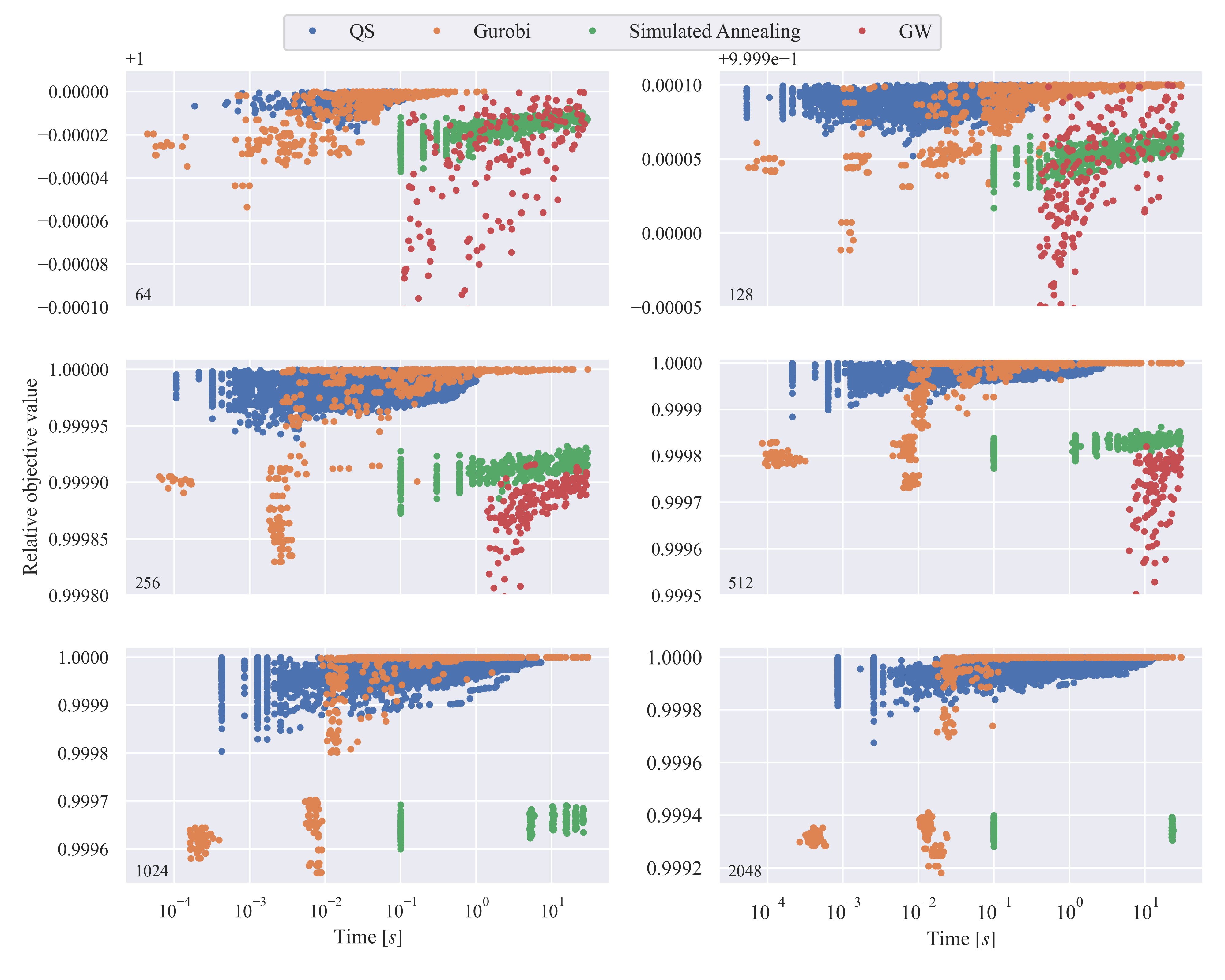}
    \caption{Comparison of our biased quantum search method with the classical methods Gurobi, Simulated Annealing, and Goemanns Williamson.
    We observe that our algorithm has the potential to outperform classical methods in finding incumbent solutions.
    Gurobi usually performs better at finding near-optimal and optimal solutions, but our method can often find better solutions early.}
    \label{fig:main_comparison}
\end{figure*}

\section{Results}

In this section, we provide the numerical results of our quantum algorithm.
Firstly, we discuss some properties of the benchmarking techniques and improvements specific to the CBQS presented in the previous section.
Afterward, we compare the algorithm's performance with classical algorithms, especially the state-of-the-art solver Gurobi.

\subsection{General Results}
For large, complex instances, our exact benchmarking method fails due to runtime and memory constraints; we aim to provide good runtime predictions for our quantum algorithm using a classical equivalent sampling strategy.
\cref{fig:general_properties} ((a) and (c)) shows that the exact and the sampling-based benchmarking routines predict similar behaviors of the objective value with increasing number of oracle applications.
Moreover, the exact benchmarking also predicts better performance for the quantum algorithm, suggesting that sampling-based benchmarking provides a lower bound on overall performance.
Due to the results, we will use the sampling-based benchmarking method for all the following experiments.

\cref{fig:general_properties} also provides results on the performance of CBQS when only a single or both constraints are incorporated into the state preparation (b).
Here, this is demonstrated on a 70-variable instance with randomly sorted items.
As expected, the probability of finding a satisfying assignment after a certain number of oracle applications drastically increases when considering both constraints.
When using only a single constraint, the algorithm fails to find any solution, even with up to 2000 oracle applications.

\subsection{Improvement techniques}

For \cref{fig:improvemts}, the properties of an advanced biasing function, the look-ahead with various depths, and the item ordering are benchmarked.
Firstly, we observe that an advanced biasing function taking the item's weight and profits into account can boost the algorithm's performance (\cref{fig:improvemts} (a)).
The advanced biasing function is weighted against the standard assignment-based bias with some factor $f$.
When increasing $f$, we observe that the objective value improves, especially in the early stages of the quantum search.
Yet a poorly chosen factor $f$ that is too large increases the required time to find the optimal solution.
Following a non-constant-factor implementation might be superior.

Another improvement can be made by incorporating the look-ahead with a depth of up to 5.
Here, it doesn't suffice to compare the number of oracle applications, as the implementation cost of the oracle increases with the look-ahead depth.
For the data in \cref{fig:improvemts} (b), the oracle's cycle count is computed in \cref{theorem:look-ahead-depth}.
Despite the increased oracle cost, the reduction in the number of oracle calls overcomes this, making the quantum algorithm more performant with increased look-ahead depth.
Even though a larger depth requires more qubits, which would enable more parallel repetitions of low-depth implementation, the additional relative qubit cost drops to 0 for $n\rightarrow \infty$.
Therefore, for $n\gg d$, no implementation allows for parallel execution given the same amount of available qubits.

Lastly, we investigate the algorithm's behavior with different item orderings.
As the instances are characterized by only a list of profits and a list of weights, the ordering criteria are the ratio of profit to weight, the inverse of that, ascending weight, and descending profit.
Additionally, we investigate a random item ordering, as for a more general problem, these ordering styles might not be available.
\cref{fig:improvemts} (c) shows that all orderings have similar performances, where the inverse ratio ordering emits the best overall computational efficiency.
Despite expectations that ratio sorting would perform best, it cannot compete with inverse ratio sorting. 
Here, the algorithm considers items in decreasing order of their item efficiency $p_j / w_j$.
This might be because the adjusted constraint 2, with its negative coefficients, will have a larger right-hand side, reflecting the constraints' potential.
By first assigning the inefficient items to $0$, the constraints will be tightened, leading to stronger biasing and increasing the probability of finding good solutions.
Surprisingly, the algorithm with randomly sorted items performed well, especially during the early stages of execution. 
Although this approach fails to return solutions of comparable quality to better-structured approaches, it still suggests that the quantum algorithm may perform well on problems with less obvious item ordering.

\subsection{Comparison with classical methods}

After discussing benchmark data focused solely on the quantum algorithm, we use the best implementation to compare it with classical algorithms.
For \cref{fig:main_comparison}, we investigate instances with an exponentially increasing number of variables from 64 to 2048.
We observe that the considered heuristics, simulated annealing and Goemanns Williamson, are strictly outperformed by Gurobi and CBQS.
Additionally, in the early stage of the search, CBQS, which provides the mentioned hardware capabilities, can offer better incumbent solutions earlier than Gurobi.
Even though Gurobi eventually surpasses the performance of quantum algorithms, we observe the potential for a quantum advantage.
This potential runtime savings likely increase with the practical difficulty of the problem/instances.

\section{Discussion}
Since we can demonstrate that our algorithm has the potential to provide a quantum advantage, given a proper quantum computer and fast logical gates, a more rigorous investigation of the error-correction overhead is required.
Our goal was to extend a known quantum algorithm to a broader class of optimization problems. 
Investigating the algorithms' capabilities for tackling different, more complex problems beyond linear ones could demonstrate better performance of quantum search compared to classical algorithms.
Similar techniques can be applied not only to constrained optimization problems but also to constraint satisfaction problems. For different problem classes, other biasing strategies might become attractive. Possible candidates for biasing criteria are the objective value of a variable, the constraint consumption of a variable, or a combination of both. A systematic exploration of biasing schemes could be an interesting topic for further research, including improvements using machine learning methods such as gradient descent \cite{ruder2017overviewgradientdescentoptimization}.
Amplitude Amplification-supported versions of more specific classical heuristics could be another fruitful research direction, the main question is whether one can find efficient circuits preparing ``superposition versions" of these heuristics.

\section{Acknowledgment}

We thank Tobias J. Osborne, René Schwonnek, and Lennart Binkowski for insightful discussions.
This project was enabled by the DFG through SFB 1227(DQ-mat), QuantumFrontiers, the QuantumValley Lower Saxony, the BMBF projects ATIQ and QuBRA, the BMWK project ProvideQ, and the Quantera project ResourceQ.

\bibliographystyle{unsrt}
\bibliography{sample}

\end{document}